        \newcommand\R{{\mathbb R}}   
        \newcommand{\maps}{\colon}   
        \newcommand{\supp}{\mathrm{supp}}
	\newtheorem{theorem}{Theorem}
	\newtheorem{lemma}{Lemma}   
       \newtheorem*{proposition*}{Proposition}
\begin{document}   
 
	\begin{center}   
	{\bf A Noether Theorem for Markov Processes \\}   
        \vspace{0.3cm}
	{\em John\ C.\ Baez \\}
        \vspace{0.3cm}
	{\small Centre for Quantum Technologies  \\
        National University of Singapore \\
        Singapore 117543  \\ and \\
        Department of Mathematics \\
        University of California \\
        Riverside CA 92521 \\ } 
        \vspace{0.3cm}
	{\em Brendan Fong \\}
        \vspace{0.3cm}
        {\small Mathematical Institute  \\
        University of Oxford  \\
        OX2 6UD, United Kingdom  \\ }
	\vspace{0.3cm}   
        {\small email:  baez@math.ucr.edu, fong@maths.ox.ac.uk\\} 
	\vspace{0.3cm}   
	{\small March 9, 2012}
	\vspace{0.3cm}   
	\end{center}

\begin{abstract}
\noindent Noether's theorem links the symmetries of a quantum system
with its conserved quantities, and is a cornerstone of quantum
mechanics. Here we prove a version of Noether's theorem for Markov
processes.  In quantum mechanics, an observable commutes with the
Hamiltonian if and only if its expected value remains constant in
time for every state.  For Markov processes that no longer holds, but
an observable commutes with the Hamiltonian if and only if both its
expected value and standard deviation are constant in time for 
every state.  
\end{abstract}

\section{Introduction}

There is a rich analogy between quantum mechanics and what one might
call `stochastic mechanics', where probabilities take the place of
amplitudes \cite{Azimuth}.  In quantum mechanics, we specify the state 
of a system by an element $\psi$ of a Hilbert space, and describe its 
time evolution by the Schr\"odinger equation: 
\[ 
 \frac{d}{dt} \psi = -iH\psi  
\]
where $H$ is a self-adjoint linear operator called the Hamiltonian.
For Markov processes, we specify the state of a system by a
probability distribution $\psi$ on some measure space, and describe its
time evolution by the so-called `master equation'
\[  
\frac{d}{dt} \psi = H\psi  
\]
where $H$ is a linear operator variously known as a `stochastic
Hamiltonian', `transition rate matrix' or `intensity matrix'.  
In quantum mechanics, it is well-known that conserved quantities 
correspond to self-adjoint operators that commute with the Hamiltonian.   
Here we present a similar result for Markov processes. 

To avoid technicalities and focus on the basic idea, we start by
considering Markov processes, or technically `Markov semigroups',
where the measure space $X$ is just a finite set equipped with its
counting measure.  Later we consider the general case.  We begin by
reviewing some basic facts and setting up some definitions; for
details see Stroock \cite{St}.

When $X$ is a finite set, a \textbf{probability distribution} on $X$
is a function $\psi \maps X \to \R $ such that $\psi_i \ge 0$ for
all $i \in X$ and
\[              \sum_i \psi_i = 1. \]
We say an operator $U \maps \R^X \to \R^X$ is \textbf{stochastic}
if it is linear and it maps probability distributions to probability
distributions.  A \textbf{Markov semigroup} consists of operators
$U(t) \maps \R^X \to \R^X$, one for each $t \in [0,\infty)$, such that:
\begin{enumerate}[(i)]
\item $U(t)$ is stochastic for all $t \ge 0$
\item $U(t)$ depends continuously on $t$.
\item $U(s+t) = U(s)U(t)$ for all $s,t \ge 0$.
\item $U(0) = I$.
\end{enumerate}
Any Markov semigroup may be written as $U(t) = \exp(tH)$ for a 
unique linear operator $H \maps \R^X \to \R^X$.  Moreover, this
operator $H$ is \textbf{infinitesimal stochastic}, meaning that if 
we write it as a matrix using the canonical basis for $\R^X$, then:
\begin{enumerate}[(i)]
\item $H_{ij} \ge 0$ for all $i,j \in X$ with $i \ne j$.
\item $\sum_{i \in X} H_{i j} = 0$ for all $i \in X$.
\end{enumerate}
For $i \ne j$ the matrix entry $H_{ij}$ is the probability per time of
a transition from the state $j\in X$ to the state $i \in X$.
Condition (i) says that these probabilities are nonnegative.
Condition (ii) then says that the diagonal entry $H_{ii}$ is minus the
probability per time of a transition out of the state $i$.

Conversely, for any infinitesimal stochastic operator $H \maps \R^X
\to \R^X$, $\exp(tH)$ is a Markov semigroup.  Given any function $\psi
\maps X \to \R$, we obtain a solution of the \textbf{master equation}:
\[  
\frac{d}{dt} \psi(t) = H\psi(t)
\]
with $H$ as \textbf{Hamiltonian} and $\psi$ as the initial value by setting
\[   \psi(t) = \exp(tH) \psi .\]
If $\psi$ is a probability distribution, then so is $\psi(t)$ for all
$t \ge 0$.

Next we turn to Noether's theorem.  There are many theorems of this
general type, all of which relate \emph{symmetries} of a physical
system to its \emph{conserved quantities}.  Noether's original theorem
applies to the Lagrangian approach to classical mechanics, and obtains
conserved quantities from symmetries of the Lagrangian \cite{By,KS}.  In the
Hamiltonian approach to classical mechanics, any observable having
vanishing Poisson brackets with the Hamiltonian both generates
symmetries of the Hamiltonian and is a conserved quantity.  This idea
extends to quantum mechanics if we replace Poisson brackets by
commutators.  It is this last form of Noether's theorem, somewhat removed
from the original form but very easy to prove, that we now generalize
to Markov processes.  For a Markov process, an observable will commute
with the Hamiltonian if and only if both its expected value and that
of its square are constant in time for every state.

Here an \textbf{observable} is a function $O \maps X \rightarrow \R$
assigning a real number $O_i$ to each state $i \in X$.  We identify $O$
with the diagonal matrix with $ii$th entry equal to $O$, and define
its \textbf{expected value} for a probability distribution $\psi$ to be
\[
\langle O, \psi \rangle =  \sum_{i \in X} O_i \psi_i  .
\]

Our Noether theorem for Markov processes may then be stated as follows:

\begin{proposition*}[Noether's Theorem, Stochastic Version]
Let $X$ be a finite set, let $H \maps \R^X \to \R^X$ be an
infinitesimal stochastic operator, and let $O$ be an observable. Then
$[O,H] = 0$ if and only if for all probability distributions $\psi(t)$
obeying the master equation $\frac{d}{dt} \psi(t) = H\psi(t)$, the
expected values $\langle O, \psi(t) \rangle$ and $\langle O^2,
\psi(t)\rangle$ are constant.
\end{proposition*}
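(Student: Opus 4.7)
The strategy is to reduce everything to matrix entries. Since $O$ is diagonal with entries $O_i$, the commutator has the transparent form $[O,H]_{ij} = (O_i - O_j) H_{ij}$. Differentiating the expected value using the master equation gives
\[ \frac{d}{dt}\langle O, \psi(t)\rangle = \sum_{i,j} O_i H_{ij} \psi_j(t), \]
and likewise with $O^2$ in place of $O$. Both directions of the equivalence will follow from manipulating these sums together with the column identity $\sum_i H_{ij} = 0$ coming from infinitesimal stochasticity.

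\emph{Forward direction.} Assuming $[O,H]=0$, I first observe that $[O^2, H]=0$ as well, directly from the entry formula since $O_i^2 - O_j^2 = (O_i+O_j)(O_i - O_j)$. Then in each derivative above, the relation $O_i H_{ij} = O_j H_{ij}$ (respectively $O_i^2 H_{ij} = O_j^2 H_{ij}$) lets me pull $O_j \psi_j$ (resp.\ $O_j^2 \psi_j$) outside the $i$-sum, and what remains is $\sum_i H_{ij} = 0$. So both expected values are constant.

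\emph{Backward direction.} Here I specialize the hypothesis to the basis probability distributions $\psi = e_j$ evaluated at $t=0$, which yields $\sum_i O_i H_{ij} = 0$ and $\sum_i O_i^2 H_{ij} = 0$ for every $j$. Combining these with $\sum_i H_{ij} = 0$ and expanding $(O_i - O_j)^2$ gives
\[ \sum_i (O_i - O_j)^2 H_{ij} = 0 \]
for each $j$. Now comes the decisive step: for $i \ne j$ both $H_{ij} \ge 0$ and $(O_i - O_j)^2 \ge 0$, while the $i=j$ term is zero, so a sum of nonnegative quantities vanishes. Hence each term vanishes, giving $(O_i - O_j) H_{ij} = 0$ for all $i,j$, i.e.\ $[O,H]=0$.

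\emph{Main obstacle.} The subtle point is the sum-of-squares argument in the backward direction, which is what forces the second-moment hypothesis. In the quantum analogue, vanishing of the derivative of $\langle O\rangle$ for every state already forces $[O,H]=0$ because states span the Hilbert space linearly. Here probability distributions span only the positive cone, so first-moment information is too weak; one must combine it with the second moment and the sign condition $H_{ij}\ge 0$ for $i\ne j$ to recover commutation. Recognizing that this is the right additional hypothesis, and that the completing-the-square identity above is what couples it to nonnegativity, is really the whole content of the theorem.
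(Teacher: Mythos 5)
Your proposal is correct and follows essentially the same route as the paper: the forward direction via the column identity $\sum_i H_{ij}=0$, and the backward direction by specializing to the basis distributions $e_j$ and using the completing-the-square identity $\sum_i (O_i-O_j)^2 H_{ij}=0$ together with the sign condition on off-diagonal entries. The only cosmetic difference is that the paper routes the backward direction through the intermediate statement that $O$ is constant on connected components of the transition graph, whereas you pass directly from the vanishing of each term to $[O,H]_{ij}=(O_i-O_j)H_{ij}=0$.
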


For comparison, in the quantum version, both the Hamiltonian and
the observable are given by self-adjoint operators on a Hilbert
space.  To avoid technicalities, we only state the version for bounded
operators:

\begin{proposition*}[Noether's Theorem, Quantum Version]
Let $H$ and $O$ be bounded self-adjoint operators on a Hilbert space.
Then $[O,H] = 0$ if and only if for all states $\psi(t)$ obeying
Schr\"odinger's equation $\frac{d}{d t} \psi(t) = -i H \psi(t)$ the
expected value $ \langle \psi(t), O \psi(t) \rangle $ is constant.
\end{proposition*}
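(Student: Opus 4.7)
The plan is to reduce both implications to a single identity for $\frac{d}{dt}\langle \psi(t), O\psi(t)\rangle$. Since $H$ is bounded and self-adjoint, the one-parameter unitary group $e^{-itH}$ is norm-continuous, the unique solution to Schr\"odinger's equation with initial datum $\psi(0) = \psi$ is $\psi(t) = e^{-itH}\psi$, and $\frac{d}{dt}\psi(t) = -iH\psi(t)$ in the norm topology. Applying the product rule to $\langle \psi(t), O\psi(t)\rangle$ and using self-adjointness of $H$ to move the $-iH$ acting on the left factor across the inner product as a $+iH$ acting on the right, I obtain
\[
\frac{d}{dt}\langle \psi(t), O\psi(t)\rangle = i\langle \psi(t), HO\psi(t)\rangle - i\langle \psi(t), OH\psi(t)\rangle = -i\langle \psi(t), [O,H]\psi(t)\rangle.
\]

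The forward direction is then immediate: if $[O,H] = 0$, the right-hand side vanishes for every $t$ and every $\psi$, so the expected value is constant along every solution. For the converse, I will assume constancy along every solution and evaluate the displayed derivative at $t=0$; since every unit vector in the Hilbert space arises as the initial datum of some solution, this yields $\langle \psi, [O,H]\psi\rangle = 0$ for all $\psi$. I then invoke the polarization identity over $\mathbb{C}$: for any bounded operator $A$ on a complex Hilbert space, the vanishing of $\langle \psi, A\psi\rangle$ for every $\psi$ forces $A = 0$. Applying this to $A = [O,H]$ concludes the argument.

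The only genuinely delicate step is this last one, and it is precisely what distinguishes the quantum case from its Markov counterpart. The polarization identity requires the scalar field to be $\mathbb{C}$: over $\mathbb{R}$, skew-symmetric operators such as a $90^\circ$ rotation satisfy $\langle \psi, A\psi\rangle = 0$ for every $\psi$ without being zero, which is exactly why the stochastic Noether theorem needs the second moment $\langle O^2, \psi(t)\rangle$ to be conserved as well. Beyond this point, the proof is essentially the classical Ehrenfest computation, with the boundedness hypothesis on $H$ serving only to avoid domain technicalities when differentiating $e^{-itH}\psi$ in norm.
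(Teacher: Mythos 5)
Your proof is correct and follows essentially the same route as the paper, which likewise derives $\frac{d}{dt}\langle \psi(t), O\psi(t)\rangle = -i\langle \psi(t), [O,H]\psi(t)\rangle$ and then invokes the polarization identity over $\mathbb{C}$ to conclude $[O,H]=0$ from the vanishing of the quadratic form. Your additional remarks on norm-differentiability of $e^{-itH}\psi$ and the failure of polarization over $\R$ are accurate elaborations of the paper's sketch, not deviations from it.
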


The similarity between these two results is striking, but this also
illuminates a key difference: the Markov version requires not just the
expected value of the observable to be constant, but also the expected
value of its square. This condition cannot be weakened to only require
that the expected value be constant. Observe that if
\[
H = \left(\begin{array}{rrr}
0 & 1 & 0 \\
0 & -2 & 0 \\
0 & 1 & 0
\end{array}\right) 
\qquad \rm{and} 
\qquad O=  \left(\begin{array}{ccc}
0 & 0 & 0 \\
0 & 1 & 0 \\
0 & 0 & 2 
\end{array}\right)
\]
then for $\psi(0) = (0,1,0)$, we have $\frac{d}{dt}\langle O, \psi
\rangle = 0$, but $[O,H] \ne 0$.

Indeed, in both the quantum and stochastic cases, the time derivative
of the expected value of an observable $O$ is expressed in terms of
the commutator $[O,H]$. In the quantum case we have
\[ 
\frac{d}{d t} \langle \psi(t), O \psi(t) \rangle = 
- i \langle \psi(t), [O,H] \psi(t) \rangle
\]
for any solution $\psi(t)$ of Schr\"odinger's equation.
The polarisation identity then implies that the right-hand side vanishes 
for all solutions if and only if $[O,H] = 0 $. In the stochastic case we 
have
\[
\frac{d}{d t} \langle O, \psi(t)\rangle 
= \langle 1, [O,H] \psi(t) \rangle
\]
for any solution $\psi(t)$ of the master equation.  However, in this
case the right-hand side can vanish for all solutions $\psi(t)$ without 
$[O,H] = 0$, as shown by the above example. To ensure $[O,H] = 0$ we 
need a supplementary hypothesis, such as the vanishing of 
$\frac{d}{d t} \langle O^2 ,\psi(t) \rangle$.   

What is the meaning of this supplementary hypothesis?  Including it
means that not only is the expected value of the observable $O$ 
conserved, but so is its \textbf{variance}, defined by
\[    \langle O^2 , \psi \rangle - \langle O, \psi \rangle^2 .\]
Of course the variance is the standard deviation of $O$, so 
an observable commutes with the Hamiltonian if and only if both its
expected value and standard deviation are constant in time for every state.

\section{Proof}

While proving Proposition 1 it is enlightening to introduce some other
equivalent characterizations of conserved quantities.  For this we
shall introduce the \textbf{transition graph} of an infinitesimal
stochastic operator.  Suppose $X$ is a finite set and $H \maps \R^X
\to \R^X$ is an infinitesimal stochastic operator.  We may form a
directed graph with the set $X$ as vertices and an edge from $j$ to
$i$ if and only if $H_{ij} \ne 0$.  We say $i$ and $j$ are in the same
\textbf{connected component} of this graph if there is a sequence of
vertices $j = k_0, k_1, \dots, k_n = i$ such that for each $0 \le \ell
< n$ there is either an edge from $k_\ell$ to $k_{\ell+1}$ or from
$k_{\ell+1}$ to $k_\ell$.

Our Noether theorem is the equivalence of (i) and (iii) in this result:

\begin{theorem} \label{thm1}
Let $X$ be a finite set, let $H \maps \R^X \to \R^X$ be an
infinitesimal stochastic operator, and let $O$ be an observable.  Then
the following are equivalent:

\begin{enumerate}[(i)]

\item $[O,H] =0$.

\item $ \frac{d}{d t} \langle f(O), \psi(t) \rangle = 0$ for all
polynomials $f\maps \R \rightarrow \R$ and all $\psi$ satisfying the
master equation with Hamiltonian $H$.

\item $ \frac{d}{d t} \langle O, \psi(t)\rangle = \frac{d}{d t} \langle
O^2, \psi(t)\rangle =0$ for all $\psi$ satisfying the master equation
with Hamtiltonian $H$.

\item $O_i = O_j$ if $i$ and $j$ lie in the same connected component
of transition graph of $H$.
\end{enumerate}
\end{theorem}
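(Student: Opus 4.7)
The plan is to establish the cycle of implications (i) $\Rightarrow$ (ii) $\Rightarrow$ (iii) $\Rightarrow$ (iv) $\Rightarrow$ (i), so that all four conditions become equivalent. The key computational identity I would set up first is that for any diagonal operator $D$, since $\langle D,\psi\rangle = \mathbf{1}^T D\psi$ and the column sums of $H$ vanish (so $\mathbf{1}^T H = 0$), we have $\frac{d}{dt}\langle D,\psi(t)\rangle = \mathbf{1}^T DH\psi(t) = \mathbf{1}^T[D,H]\psi(t)$.

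For (i) $\Rightarrow$ (ii), I would note that if $[O,H]=0$ then $[O^k,H]=0$ for every $k \ge 0$, and hence $[f(O),H]=0$ for any polynomial $f$; combined with the identity above, this kills the time derivative of $\langle f(O),\psi(t)\rangle$. The implication (ii) $\Rightarrow$ (iii) is immediate by specializing to $f(x)=x$ and $f(x)=x^2$.

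The main obstacle, and the interesting step, is (iii) $\Rightarrow$ (iv). I would test (iii) against the basis of delta distributions: applying $\frac{d}{dt}\langle O,\psi\rangle\bigr|_{t=0} = 0$ to $\psi(0)=\delta_j$ gives $\sum_i (O_i - O_j)H_{ij} = 0$, and applying the analogous identity for $O^2$ gives $\sum_i (O_i^2 - O_j^2)H_{ij} = 0$. The trick is to combine these by forming
\[
\sum_i (O_i - O_j)^2 H_{ij} = \sum_i (O_i^2 - O_j^2)H_{ij} - 2O_j\sum_i (O_i - O_j)H_{ij} = 0.
\]
The diagonal term $i=j$ vanishes automatically, and for $i \ne j$ both $(O_i-O_j)^2$ and $H_{ij}$ are nonnegative, forcing $(O_i-O_j)^2 H_{ij}=0$ for every $i \ne j$. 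Thus $O_i = O_j$ whenever there is an edge between $i$ and $j$ in the transition graph, and a straightforward induction on path length extends this to all pairs in the same connected component.

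Finally, (iv) $\Rightarrow$ (i) is a direct matrix calculation: the entries of the commutator are $[O,H]_{ij} = (O_i - O_j)H_{ij}$, which vanish whenever either $H_{ij}=0$ or $i,j$ belong to the same connected component of the transition graph; hypothesis (iv) makes these the only cases that matter, so $[O,H]=0$. I expect (iii) $\Rightarrow$ (iv) to be the only step that requires any real insight, precisely because one has to see that combining the two first-order constraints produces a manifestly sign-definite expression from which connectivity information can be read off.
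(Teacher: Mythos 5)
Your proposal is correct and follows essentially the same route as the paper: the same cycle (i) $\Rightarrow$ (ii) $\Rightarrow$ (iii) $\Rightarrow$ (iv) $\Rightarrow$ (i), the same use of the vanishing column sums of $H$ to kill $\langle 1, Hf(O)\psi\rangle$, the same test of (iii) against delta distributions, and the same key observation that $\sum_i (O_i-O_j)^2 H_{ij}$ is a sum of nonnegative terms equal to a combination of the two conserved-moment identities. The only cosmetic difference is that you subtract the $O_j$-multiples before combining, where the paper expands $\sum_i (O_j-O_i)^2 H_{ij}$ into three separately vanishing terms.
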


\begin{proof}
We prove (i) $\Rightarrow$ (ii) $\Rightarrow$ (iii) $\Rightarrow$ (iv)
$\Rightarrow$ (i).
\smallskip

\noindent \textbf{(i)\phantom{v} $\Rightarrow$ \phantom{i}(ii)} \quad
As $H$ commutes with $O$, the Taylor expansion of $f$ shows that $H$
commutes with $f(O)$ whenever $f$ is a polynomial. From this and the
master equation we have
\[
\tfrac{d}{d t} \langle f(O), \psi(t)\rangle 
= \langle f(O), \tfrac{d}{d t} \psi(t) \rangle 
= \langle f(O), H \psi(t)\rangle = \langle 1, f(O) H \psi(t)\rangle
= \langle 1, H f(O) \psi(t)\rangle
\]
But $H$ is infinitesimal stochastic, so
\[
\langle 1, H f(O) \psi(t) \rangle 
=\sum_{i, j \in X} H_{i j} f(O_j)\psi_j(t) 
= \sum_{j \in X} \bigg( \sum_{i \in X} H_{ij} \bigg) f(O_j)\psi_j(t) = 0 .
\]
\smallskip

\noindent \textbf{(ii)\phantom{'} $\Rightarrow$ (iii)} \quad Both $O$
and $O^2$ are polynomials in $O$.
\smallskip

\noindent \textbf{(iii) $\Rightarrow$ (iv)} \quad Suppose that $i,j
\in X$ lie in the same connected component.  We claim that then $O_i =
O_j$.  Clearly it suffices to show $O_i = O_j$ whenever $H_{ij} \ne 0$.
And for this, this it is enough to show that for any $j \in X$ we have
\[
\sum_{i \in X} (O_j-O_i)^2 H_{i j} =0.
\]
This is enough, as each term in this sum is nonnegative: when $i = j$
we have $O_j-O_i = 0$, while when $i \ne j$, both $(O_j-O_i)^2$ and
$H_{ij}$ are nonnegative---the latter because $H$ is infinitesimal
stochastic. Thus when their sum is zero each term $(O_j-O_i)^2H_{ij}$
is zero.  But this means that if $H_{ij}$ is nonzero, then $O_i =
O_j$, and this proves the claim.

Expanding the above expression then, we have
\[
\sum_{i\in X} (O_j-O_i)^2 H_{ij} = O_j^2\sum_{i \in X} H_{ij} - 2O_j
\sum_{i \in X} O_i H_{i j} + \sum_{i \in X} O_i^2 H_{i j}.
\]
The three terms here are each zero: the first because $H$ is
infinitesimal stochastic, and the latter two since, if $e_j$ is the
probability distribution with value 1 at $j \in X$ and 0 elsewhere,
then
\[
\left.\frac{d}{dt}\langle O, \exp(tH) e_j\rangle\right|_{t=0} = 
\langle O, He_j\rangle =  \sum_{i \in X} O_i H_{i j} 
\]
and
\[ \left.\frac{d}{dt}\langle O^2, \exp(tH) e_j\rangle\right|_{t=0} = 
\langle O^2, He_j\rangle = \sum_{i \in X} O_i^2 H_{i j},
\]
and by hypothesis these two derivatives are both zero.

\smallskip

\noindent \textbf{(iv)\phantom{i} $\Rightarrow$ \phantom{.}(i)} \quad
When $H_{i j}$ is nonzero, the states $i$ and $j$ lie in the same
component, so $O_i=O_j$. Thus for each $i,j \in X$:
\[
 [O,H]_{i j} = (O H - H O)_{i j} = O_i H_{i j} - H_{i j}O_j 
= (O_i-O_j)H_{i j}  = 0. \qquad \qedhere
\] 
\end{proof}

\section{Generalization}

In this section we generalize Noether's theorem for Markov processes
from the case of a finite set of states to a more general measure
space.  This seems to require some new ideas and techniques.  

Suppose that $X$ is a $\sigma$-finite measure space with a measure we
write simply as $dx$.  Then probability distributions $\psi$ on $X$
lie in $L^1(X)$.  We define an \textbf{observable} $O$ to be any element
of the dual Banach space $L^\infty(X)$, allowing us to define the expected
valued of $O$ in the probability distribution $\psi$ to be
\[   \langle O, \psi \rangle = \int_X O(x) \psi(x) \, dx .\]
We can also think of an observable $O$ as a bounded operator on 
$L^1(X)$, namely the operator of multiplying by the function $O$.

Let us say an operator $U \maps L^1(X) \to L^1(X)$ is \textbf{stochastic}
if it is linear, bounded, and maps probability distributions to probability
distributions.  Equivalently, $U$ is stochastic if it is linear and obeys
\[                \psi \ge 0 \implies U \psi \ge 0 \]
and
\[         \int_X (U\psi)(x) \, dx = \int_X \psi(x) \, dx \]
for all $\psi \in L^1(X)$.   We may also write the latter equation 
as 
\[    \langle 1, U \psi \rangle = \langle 1 , \psi \rangle .\]

A \textbf{Markov semigroup} is a strongly continuous one-parameter
semigroup of stochastic operators $U(t) \maps L^1(X) \to L^1(X)$.  By
the Hille--Yosida theorem \cite{Yo}, any Markov semigroup may be
written as $U(t) = \exp(tH)$ for a unique closed operator $H$ on
$L^1(X)$.  Any operator $H$ that arises this is \textbf{infinitesimal
stochastic}.  However, such operators are typically unbounded and only
densely defined.  This makes it difficult to work with the commutator
$[O,H]$, because the operator $O$ may not preserve the domain of $H$.
From our experience with quantum mechanics, the solution is to work
instead with the commutators $[O,\exp(tH)]$, which are bounded operators
defined on all of $L^1(X)$.  This amounts to working directly with the
Markov semigroup instead of the infinitesimal stochastic operator $H$.

\begin{theorem}
Suppose $X$ is a $\sigma$-finite measure space and 
$$U(t) \maps L^1(X) \to L^1(X)$$ 
is a Markov semigroup.  Suppose $O$ is an observable.  
Then $[O,U(t)] = 0$ for all $t \ge 0$ if and only if for all probability 
distributions $\psi$ on $X$, the expected values $\langle O, U(t) \psi 
\rangle$ and $\langle O^2, U(t) \psi\rangle$ are constant as a function of 
$t$.
\end{theorem}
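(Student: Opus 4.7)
The plan is to mirror the finite-set argument of Theorem \ref{thm1} by passing to the dual semigroup $U(t)^*$ on $L^\infty(X)$, with the combinatorial condition (iv) of Theorem \ref{thm1} replaced by the Kadison--Schwarz inequality for positive unital maps on the commutative $C^*$-algebra $L^\infty(X)$.

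The forward direction is straightforward: if $[O, U(t)] = 0$ then also $[O^2, U(t)] = O[O, U(t)] + [O, U(t)]O = 0$, and stochasticity of $U(t)$ preserves integration against $1$, giving
\[
\langle O, U(t)\psi\rangle = \langle 1, O U(t)\psi\rangle = \langle 1, U(t)(O\psi)\rangle = \langle 1, O\psi\rangle = \langle O, \psi\rangle
\]
for every probability distribution $\psi$ (which makes sense since $O\psi \in L^1(X)$ by H\"older), and the same calculation handles $O^2$.

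For the converse, first extend the hypothesis from probability distributions to all $\psi \in L^1(X)$ by writing $\psi = \psi^+ - \psi^-$ and normalizing; this yields $U(t)^* O = O$ and $U(t)^*(O^2) = O^2$ in $L^\infty(X)$, where $U(t)^*$ denotes the Banach-space adjoint. Since $U(t)$ is positivity-preserving and stochastic, $U(t)^*$ is a positive unital operator on the commutative $C^*$-algebra $L^\infty(X)$. Introduce the carr\'e du champ
\[
\Gamma_t(f, g) := U(t)^*(fg) - (U(t)^* f)(U(t)^* g).
\]
The Kadison--Schwarz inequality gives $\Gamma_t(f, f) \ge 0$ a.e.\ for real $f \in L^\infty(X)$, and polarization---forcing the discriminant of the quadratic $(a, b) \mapsto \Gamma_t(af + bg, af + bg)$ to be non-positive---yields the Cauchy--Schwarz estimate $\Gamma_t(f, g)^2 \le \Gamma_t(f, f)\,\Gamma_t(g, g)$ a.e. Our hypothesis forces $\Gamma_t(O, O) = U(t)^*(O^2) - (U(t)^* O)^2 = O^2 - O^2 = 0$, so $\Gamma_t(O, f) = 0$ for every $f \in L^\infty(X)$. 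Unpacking, $U(t)^*(Of) = O \cdot U(t)^* f$, so $[O, U(t)^*] = 0$, and taking Banach-space adjoints recovers $[O, U(t)] = 0$ on $L^1(X)$.

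The main obstacle is to justify the Kadison--Schwarz inequality rigorously in this $L^\infty$ setting. The cleanest route is that every positive linear map on a commutative $C^*$-algebra is automatically completely positive, hence $2$-positive; applying $U(t)^*$ entrywise to the manifestly positive matrix $\bigl(\begin{smallmatrix} 1 & f \\ f & f^2\end{smallmatrix}\bigr)$ produces a positive matrix whose determinant inequality is exactly $U(t)^*(f^2) \ge (U(t)^* f)^2$. Conceptually this captures the measure-theoretic analogue of condition (iv) of Theorem \ref{thm1}: if one represents $U(t)^*$ by a Markov kernel $p_t(x, dy)$, then the hypothesis says $p_t(x, \cdot)$ has $O$-mean $O(x)$ and $O^2$-mean $O(x)^2$, so the variance of $O$ under $p_t(x, \cdot)$ vanishes, forcing $p_t(x, \cdot)$ to be supported on the level set $\{y : O(y) = O(x)\}$ for a.e.\ $x$---exactly the continuous counterpart of ``$H_{ij} \ne 0$ implies $O_i = O_j$.''
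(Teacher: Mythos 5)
Your proof is correct, but it takes a genuinely different route from the paper's. The paper reduces the semigroup statement to the corresponding statement for a single stochastic operator $U$, and then proves the converse through three lemmas: conservation of the first and second moments implies that $U$ preserves the subspaces of functions supported on $O^{-1}(I)$ for compact intervals $I$ (proved by pushing $\psi\,dx$ forward along $O$ to a measure on $\R$, partitioning $I$ into $2n$ subintervals, and applying Chebyshev's inequality to each piece before letting $n \to \infty$); support preservation implies commutation with each spectral projection $\chi_I(O)$; and commutation with all these projections implies commutation with $O$ by step-function approximation. You instead dualize: the hypothesis becomes the statement that $O$ and $O^2$ are fixed points of the positive unital map $U(t)^*$ on $L^\infty(X)$, and the Kadison--Schwarz inequality together with the Cauchy--Schwarz estimate for the bilinear form $\Gamma_t$ shows that $O$ lies in the multiplicative domain of $U(t)^*$, whence $[O,U(t)^*]=0$ and, by injectivity of the adjoint map, $[O,U(t)]=0$. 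Both arguments rest on the same idea---vanishing conditional variance forces the transition kernel to respect the level sets of $O$---but yours is shorter and more structural, and it shows for free that the conserved observables form a subalgebra (the multiplicative domain of $U(t)^*$), while the paper's is elementary measure theory requiring no $C^*$-algebraic input. Two small points worth making explicit: (1) the pointwise inequalities $\Gamma_t(af+bg,af+bg)\ge 0$ hold a.e.\ with an exceptional null set depending on $(a,b)$, so you should restrict to rational $(a,b)$ and take a common null set before passing to the discriminant; (2) you do not actually need complete positivity of maps out of commutative $C^*$-algebras---in this setting Kadison--Schwarz follows in two lines by expanding $U(t)^*\bigl((f-\lambda)^2\bigr)\ge 0$ and optimizing over rational $\lambda$, which keeps the argument as elementary as the paper's.
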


This result is an easy consequence of the the following one, which is
of interest in its own right, since it amounts to a Noether's theorem
for Markov chains.  A `Markov chain' is similar to a Markov process,
but time comes in discrete steps, and at each step the probability
distribution $\psi$ evolves via $\psi \mapsto U \psi$ for some
stochastic operator $U$.

\begin{theorem}
Suppose $X$ is a $\sigma$-finite measure space and 
$U \maps L^1(X) \to L^1(X)$ is stochastic operator.  Suppose $O$ 
is an observable.  Then $[O,U] = 0$ if and only if for all 
probability distributions $\psi$ on $X$, $\langle O, U \psi 
\rangle = \langle O, \psi \rangle$ and $\langle O^2, 
U \psi \rangle = \langle O^2, \psi \rangle$. 
\end{theorem}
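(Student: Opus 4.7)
The forward direction is a short calculation: $[O, U] = 0$ implies $[O^2, U] = 0$ (by associativity, since $O$ commutes with itself), and then for any probability distribution $\psi$ and $k \in \{1, 2\}$, using that $U$ is integral-preserving,
\[
\langle O^k, U \psi \rangle = \langle 1, O^k U \psi \rangle = \langle 1, U(O^k \psi) \rangle = \langle 1, O^k \psi \rangle = \langle O^k, \psi \rangle.
\]

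For the converse, the plan is to dualize and apply a Kadison--Schwarz inequality. Since $U$ is positive and integral-preserving, its Banach adjoint $U^* : L^\infty(X) \to L^\infty(X)$ is a positive operator satisfying $U^* 1 = 1$. Writing any $\psi \in L^1(X)$ as $\psi = \psi^+ - \psi^-$ (each summand a nonnegative scalar multiple of a probability distribution), the hypotheses extend by linearity to all $\psi \in L^1(X)$, and dualization yields $U^* O = O$ and $U^* O^2 = O^2$ in $L^\infty(X)$.

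The heart of the proof is a bilinear Kadison--Schwarz argument. For any $g \in L^\infty(X)$ and any $\alpha, \beta \in \R$, applying positivity of $U^*$ to $((\alpha O + \beta g) - \lambda)^2 \geq 0$ for all $\lambda \in \R$ and taking the discriminant in $\lambda$ of the resulting nonnegative quadratic yields the Kadison--Schwarz bound
\[
U^*\bigl((\alpha O + \beta g)^2\bigr) \geq \bigl(U^*(\alpha O + \beta g)\bigr)^2 \quad \text{a.e.}
\]
Expanding both sides produces a nonnegative quadratic form in $(\alpha, \beta)$ whose discriminant gives
\[
\bigl(U^*(Og) - (U^*O)(U^*g)\bigr)^2 \leq \bigl(U^*(O^2) - (U^*O)^2\bigr)\bigl(U^*(g^2) - (U^*g)^2\bigr)
\]
pointwise a.e. By our hypotheses the first factor on the right is $O^2 - O^2 = 0$, forcing $U^*(Og) = O \cdot U^* g$ for every $g \in L^\infty(X)$. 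Dualizing this identity back to $L^1(X)$ then yields $U(O\psi) = O \cdot U\psi$ for all $\psi \in L^1(X)$, i.e., $[O, U] = 0$.

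The main technical point I expect to need care is the passage from the pointwise-a.e.\ inequalities (which depend on parameters $\alpha$, $\beta$, $\lambda$) to pointwise-for-all-parameters statements needed for the discriminant arguments; this is handled in the standard way by restricting parameters to rationals to combine countably many null sets into one, then extending to the real case by continuity.
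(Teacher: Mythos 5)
Your proof is correct, but the converse takes a genuinely different route from the paper's. The paper stays on the $L^1$ side: it reduces $[O,U]=0$ to the statement that $U$ preserves each spectral subspace $\{\psi : \supp(\psi) \subseteq O^{-1}(I)\}$ for compact intervals $I$ (via two lemmas approximating $O$ by step functions $f_n(O)$ and handling the projections $\chi_I(O)$), and then proves that support-preservation property by pushing $\psi$ forward along $O$ to a probability measure on $\R$ and applying Chebyshev's inequality: a piece of $\psi$ concentrated on a subinterval of length $M/n$ has standard deviation at most $M/n$, and since $U$ preserves mean and variance, $U\psi_i$ can place only $O(1/n^2)$ of its mass outside a neighborhood $J$ of $I$. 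You instead dualize to the positive unital map $U^*$ on $L^\infty(X)$ and run the multiplicative-domain argument: the hypotheses say the pointwise ``variance'' $U^*(O^2)-(U^*O)^2$ vanishes, and the Cauchy--Schwarz inequality you extract from positivity of $U^*$ then forces $U^*(Og)=O\,U^*g$ for all $g\in L^\infty(X)$, which dualizes back to $[O,U]=0$. Your route is shorter and more conceptual---it is the commutative case of Choi's theorem on multiplicative domains of positive unital maps, and it transfers essentially verbatim to the noncommutative (quantum channel) setting---whereas the paper's route is more elementary and probabilistic and yields the support-preservation characterization as a byproduct of independent interest. Your two technical caveats are exactly the right ones: restricting the parameters $\lambda,\alpha,\beta$ to rationals so that countably many null sets can be combined before extending by continuity, and reducing the hypotheses from probability distributions to all of $L^1(X)$ by decomposing $\psi=\psi^+-\psi^-$ into nonnegative multiples of probability distributions (the same normalization the paper uses in its Lemma 3).
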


\begin{proof}
First, suppose $[O,U] = 0$.  Note 
\[    \langle O , \phi \rangle = \langle 1, O \phi \rangle \]
and since $U$ is stochastic, also
\[    \langle 1, U \phi \rangle = \langle 1, \phi \rangle \]
for all $\phi \in L^1(X)$.   Thus, for any 
probability distribution $\psi$ on $X$ and any $n \ge 0$ we
have
\[
   \langle O^n, U \psi \rangle 
= \langle 1, O^n U \psi \rangle  
= \langle 1, U O^n \psi \rangle  
= \langle 1, O^n \psi \rangle 
= \langle O^n, \psi \rangle . 
\]
Taking $n = 1,2$ we get the desired result.

To prove the converse, we use three lemmas.  In all these $X$ is a
$\sigma$-finite measure space, $U \maps L^1(X) \to L^1(X)$ is a
stochastic operator, and $O$ is an observable.  We freely switch
between thinking of $O$ as a function in $L^\infty(X)$ and the
operator on $L^1(X)$ given by multiplying by that function.

\begin{lemma}
Suppose that for any compact interval $I \subseteq \R$ the operator
$U$ commutes with
$\chi_{I}(O)$, meaning the operator on $L^1(X)$ given by
multiplying by the characteristic function of
\[     O^{-1}(I) = \{ x \in X \colon \; O(x) \in I \} . \]
Then $U$ commutes with $O$.
\end{lemma}

\begin{proof}
The range of the function $O$ is contained in the interior of some interval
$[-M,M]$.  The step functions
\[        f_n = \sum_{i=-n}^{n-1} \frac{i M}{n} \, 
\chi_{[\frac{i M}{n}, \frac{(i+1)M}{n})} \]
are uniformly bounded and converge pointwise to the identity function
on the range of $O$, so by the dominated convergence theorem $f_n(O)
\psi \to O \psi$ in the $L^1$ norm for all $\psi \in L^1(X)$.  Furthermore,
though we have not written as such, $f_n$ is a linear combination of 
characteristic functions of compact intervals, since we include a single
point as a degenerate special case of a compact interval.   Thus, by 
hypothesis, $U$ commutes with $f_n(O)$.  It follows that
for every $\psi \in L^1(X)$, we have
\[    
\begin{array}{ccl}
O U \psi &=& \lim_{n \to \infty} f_n(O) U \psi   \\
&=& \lim_{n \to \infty} U f_n(O) \psi \\
&=& U O \psi .  
\end{array} \qedhere
\]
\end{proof}

\begin{lemma}
Suppose that for every compact interval $I$ and every $\psi \in L^1(X)$,
\[   \supp (\psi) \subseteq O^{-1}(I) \; \Longrightarrow \; 
\supp U (\psi) \subseteq O^{-1}(I) .  \]
Then $U$ commutes with every operator $\chi_I(O)$.
\end{lemma}

\begin{proof}
The set of $L^1$ functions supported in $O^{-1}(I)$ is the range of
the operator $\chi_I(O)$, so the hypothesis says that $U$ maps the 
range of this operator to itself.  
Given any $\psi \in L^1(X)$ and writing $p = \chi_I(O)$, we have
\[  \psi = p \psi + (1-p) \psi \]
so
\[   p U \psi = p U p \psi + p U (1-p) \psi .\]
Since $U$ preserves the range of $p$ 
and $p$ is the identity on this range, we have $p U p \psi = U p \psi$.
Since the range of the function $O$ is contained in some interval
$[-M,M]$, can write $1-p$ as a linear combination of operators
$\chi_J(O)$ for other compact intervals $J$, again using
the fact that a point is a degenerate case of a compact interval.
Thus $U$ also preserves the range of $1-p$.  Since the range of 
$1-p$ is the kernel of $p$, $pU(1-p) \psi = 0$.  We thus have
\[   p U \psi = U p \psi.  \qedhere \]
\end{proof}

\begin{lemma} Suppose that
$\langle O, U \psi \rangle = \langle O, \psi \rangle$ and 
$\langle O^2, U \psi \rangle = \langle O^2, \psi \rangle$
for all $\psi \in L^1(X)$.  
Then for every compact interval $I$ and every $\psi \in L^1(X)$,
\[   \supp( \psi) \subseteq O^{-1}(I) \; \Longrightarrow \; 
\supp (U \psi) \subseteq O^{-1}(I) .  \]
\end{lemma}

\begin{proof} 
The range of $O$ is contained in the interior of some interval
$[-M,M]$.  Thus we shall only prove the lemma for $I$ contained in
$(-M,M)$, since otherwise we can replace $I$ by a smaller compact
interval with this property without changing $O^{-1}(I)$.

Suppose $\psi$ is supported in $O^{-1}(I)$.  We wish to show the same
for $U\psi$.  It suffices to show that $U\psi$ is supported in $O^{-1}(J)$
where $J$ is any compact interval with $I \subseteq \mathrm{int}(J) \subseteq
(-M,M)$.  Moreover, we may assume that $\psi$ is a probability
distribution, since any $L^1$ function supported in $O^{-1}(I)$ is a
linear combination of two probability distributions supported in this
set.

To show $U\psi$ is supported in $J$, we write
\[   \psi = \sum_{i=-n}^{n-1} \psi_i  \]
where 
\[   \psi_i = \chi_{[\frac{i M}{n}, \frac{(i+1)M}{n})}(O)\, \psi .\]
We thus have
\[   U \psi = \sum_{i=-n}^{n-1} U \psi_i . \]
We shall show that 
\begin{equation}
\label{inequality}
       \| (1 - \chi_J(O)) U \psi_i \|_1  \le  \frac{c}{n^2}\, \|\psi_i\|_1  
\end{equation}
for some constant $c$.  
It follows that
\[ 
 \| (1 - \chi_J(O)) U\psi \|_1 \le
 \sum_{i=-n}^{n-1} \| (1 - \chi_J(O)) U \psi_i \|_1 \le
 \frac{c}{n^2} \sum_{i=-n}^{n-1} \|\psi_i \|_1
= \frac{c}{n^2} \, \|\psi\|_1 = \frac{c}{n^2}
\]
for all $n$, and thus
\[   \| (1 - \chi_J(O)) U\psi \|_1 = 0 \]
so $U\psi$ is supported in $O^{-1}(J)$.

To prove \eqref{inequality} we shall use Chebyshev's inequality, which
says that the probability of a random variable taking a value
at least $k$ standard deviations away from its mean is less than
or equal to $1/k^2$.   However, we first need to convert probability
functions on $X$ into probability measures on the real line.

For any function $\psi \in L^1(X)$ we can push forward the 
signed measure $\psi \, dx$ on $X$ via $O \maps X \to \R$ to obtain a 
signed measure on $\R$, which we call $\widetilde{\psi}$.  Concretely,
$\tilde{\psi}$ is characterized by the equation
\[  
\int_X f(O(x)) \psi(x) \, dx = 
\int_\R f \, \widetilde{\psi}
\]
which holds for any bounded measurable function $f \maps \R \to \R$.
Since the integral at left only depends on the value of $f$ in the
interval $[-M,M]$, the same is true for the integral at right, so
with no harm we can restrict $f$ and the integral to $[-M,M]$.  Clearly 
\[ \widetilde{(\phi + \chi)} = \widetilde{\phi} + \widetilde{\chi} \]
and
\[   \widetilde{(\alpha \phi)} = \alpha \, \widetilde{\phi} \]
for all $\alpha \in \R$ and $\phi, \chi \in L^1(X)$.  
If $\phi$ is nonnegative then $\widetilde{\phi}$ is a nonnegative
measure, and if $\phi$ is a probability distribution then 
$\widetilde{\phi}$ is a probability measure.

Since we are assuming the function $\psi$ is supported in $O^{-1} I$,
the measure $\widetilde{\psi}$ is supported in $I$.  Since
\[   \psi_i = \chi_{[\frac{i M}{n}, \frac{(i+1)M}{n})}(O) \psi  \]
we have 
\[  \widetilde{\psi_i} = \chi_{[\frac{i M}{n}, \frac{(i+1)M}{n})} 
\, \widetilde{\psi}  .\]
In what follows we assume $\psi_i$ is nonzero, since otherwise
\eqref{inequality} is trivial.  This implies that we can 
rescale $\widetilde{\psi_i}$ to obtain a probability 
measure, namely 
\[     \frac{1}{\|\psi_i\|_1} \, \widetilde{\psi_i}.\]
Since this probability measure is supported in the interval
$I \cap [iM/n, (i+1)M/n)$, clearly its mean lies in $I$, and its
standard deviation is $\le M/n$.

By our hypotheses on $U$,
\[   \frac{1}{\|\psi_i\|_1} \, \widetilde{U \psi_i} \]
is another probability measure on the real line, with the same
mean and standard deviation as the one above.  Thus, by Chebyshev's
inequality, the integral of this probability measure over the 
complement of $J$ is less than or equal to $((M/n)/d)^2$, where
$d$ is the distance of $I$ from the complement of $J$.  In other
words,
\[    \frac{1}{\|\psi_i\|_1} \,  \int_{\R - J} \widetilde{U \psi_i}
\le \left(\frac{Md }{n}\right)^2 \]
or writing $(M d)^2 = c$, 
\[     \int_\R (1 - \chi_J) \, \widetilde{U\psi_i}
\le \frac{c}{n^2} \, \|\psi_i \|_1 \]
or equivalently
\[     \int_X (1 - \chi_J(O)(x)) \, (U\psi_i)(x) \, dx 
\le \frac{c}{n^2} \, \|\psi_i \|_1 .\]
Since the integrand is nonnegative, this implies
\[
      \| (1 - \chi_J(O)) U \psi_i \|_1  \le  \frac{c}{n^2} \|\psi_i\|_1  
\]
which is \eqref{inequality}, as desired.  
\end{proof}

Combining these three lemmas, the converse follows.  \end{proof}

\subsection*{Acknowledgements}
BF's research was supported by an internship at the Centre for 
Quantum Technologies.

\end{document}